\documentclass[letterpaper,11pt]{article}
\usepackage{amsmath,amssymb,amsfonts,amsthm}
\usepackage{setspace}
\usepackage{bbm}
\usepackage{color}
\usepackage{enumerate}
\usepackage{mathrsfs}
\usepackage[longnamesfirst]{natbib}
\usepackage[bottom]{footmisc}
\usepackage{graphicx}
\usepackage{subcaption}
\usepackage{hyperref}
\usepackage{cleveref}
\usepackage[shortlabels]{enumitem}
\usepackage[dvipsnames,svgnames,table]{xcolor}
\usepackage{multirow}
\usepackage[margin=1.2in]{geometry}
\usepackage{enumitem}
\usepackage{appendix}
\usepackage{authblk}
\usepackage{booktabs}
\usepackage{parskip}
\newtheoremstyle{paperplain}
  {0.2\baselineskip}  
  {0.2\baselineskip} 
  {\itshape}        
  {}                
  {\bfseries}       
  {.}               
  {0.5em}           
  {}

\theoremstyle{paperplain}
\newtheorem{theorem}{Theorem}
\newtheorem{lemma}{Lemma}
\newtheorem{proposition}{Proposition}

\newtheorem{assumption}{Assumption}

\newtheorem{remark}{Remark}[section]

\crefname{assumption}{Assumption}{Assumptions}
\crefname{subassumptioni}{Assumption}{Assumptions}
\crefname{section}{Section}{Sections}
\crefname{theorem}{Theorem}{Theorems}
\crefname{lemma}{Lemma}{Lemmas}

\newlist{subassumption}{enumerate}{1}
\setlist[subassumption,1]{
  label=(\roman*),   
  ref=\theassumption(\roman*), 
  leftmargin=2em
}

\newcommand{\mar}{\text{MAR}}
\newcommand{\an}{\text{AN}}
\newcommand{\E}{\operatorname{\mathbb{E}}}

\newcommand{\Pb}{\operatorname{\mathbb{P}}}
\newcommand{\KL}{\operatorname{KL}}

\newcommand{\R}{\mathbb{R}}

\newcommand{\vech}{\operatorname{vech}}
\newcommand{\I}{\operatorname{\mathbb{I}}}

\DeclareMathOperator*{\argmin}{arg\,min}

\hypersetup{
     colorlinks   = true,
     citecolor    = blue!50!black,
     urlcolor     = blue!50!black,
     linkcolor    = blue!50!black
}

\title{\textbf{
     Testing selection on observables
     in parametric models
     with refreshment samples}\footnote{The project described in this publication relies on data from surveys administered by the Understanding America Study (UAS), which is maintained by the Center for Economic and Social Research (CESR) at the University of Southern California. The project was supported by the National Institute on Aging of the National Institutes of Health and, in part, by the Social Security Administration under Award Number U01AG077280. The content is solely the responsibility of the authors and does not necessarily represent the official views of the National Institutes of Health, USC, or UAS. The data used in the empirical application are publicly available from the UAS website subject to signing a data use agreement.
}
}

\date{\today}

\author[1]{\textsc{Grigory Franguridi}}
\author[1]{\textsc{Arie Kapteyn}}

\affil[1]{Center for Economic and Social Research, University of Southern California

Franguridi: franguri@usc.edu; Kapteyn: kapteyn@usc.edu}

\begin{document}

\maketitle

\begin{abstract}
\linespread{1.2}
    In panels with sample selection (that may occur due to attrition, nonresponse, etc.), the assumption of selection on observables (missing at random, MAR) is commonly imposed despite often being implausible.
    However, this assumption becomes testable when a refreshment sample is available.
    We develop a statistical test of MAR based on a distance between two estimated distributions: one obtained using the standard inverse probability weighting (IPW) that is valid under MAR and the other obtained using an alternative weighting that is valid under a weaker assumption of additive nonignorability of \citet{hirano2001combining}.
    This test implicitly compares the distribution of the IPW-weighted sample in the attrition period with the distribution of the refreshment sample, which coincide if the MAR assumption holds.
    We establish that, when the input distributions are parametric, our test statistic converges to the generalized chi-squared distribution under the null of MAR.
    This limit distribution can be estimated using the recursive formulas derived by \citet{franguridi2025raking}.
    We illustrate the performance of our test in Monte Carlo simulations.
    Finally, we apply our test to an empirical example using a subsample of the Understanding America Study (UAS) dataset.

\medskip

\noindent \textbf{JEL Classification:} C23

\medskip

\noindent \textbf{Keywords:} refreshment sample, attrition, sample selection, missing at random, additive nonignorability
\end{abstract}

\newpage

\section{Introduction}

In the last 60 years or so, both empirical applications of panel data models and associated methodological developments have seen tremendous growth \citep[e.g.,][]{sarafidis2021celebrating}. One of the most intricate problems in using panel data is attrition, i.e., units dropping out of the sample in a nonrandom way.
This reduces the effective sample size and can bias the resulting estimates.
These biases can be completely removed when units are \textit{missing at random} (MAR; also called \textit{selection on observables}), i.e., when the probability of attrition depends only on observed variables. In that case, inverse probability weighting of the sample of stayers leads to unbiased estimates.
However, the MAR assumption can be restrictive, and panel data alone is not sufficient to determine its plausibility.

To compensate for attrition, researchers often collect so-called \emph{refreshment samples}, i.e., samples that are designed to ``replace'' the missing units \citep{kish1959variances}.
Refreshment samples are used in many empirical settings, including survey panels \citep{deng2013handling,si2015semi,franguridi2024closed,franguridi2025inference}, public transportation data \citep{ridder1992empirical,zheng2025semiparametric}, and retail scanner data \citep{chen2017retail}.
With a refreshment sample, the MAR assumption underlying the inverse probability weighting becomes testable.

We develop a statistical test of MAR that relies on a modeling framework with \textit{additively nonignorable} (AN) attrition introduced by \citet{hirano2001combining} and further used by \citet{nevo2003using,bhattacharya2008inference,hoonhout2019nonignorable,franguridi2025raking,franguridi2025robust} and many others.
The AN assumption allows the attrition to depend on yet unobserved variables and includes MAR as a special case.
We propose parameterizing the three directly estimable distributions (the first-period distribution, the balanced panel distribution, and the refreshment sample distribution) and compute the Hellinger distance between the joint distribution of data estimated under the MAR assumption (the inverse probability weighted distribution) and that estimated under the AN assumption as a test statistic.
To compute the AN distribution, we use the raking algorithm in \citet{franguridi2025raking}.
We show that our statistic has an asymptotically generalized chi-squared distribution, which can be estimated using the recursive formulas also derived in \citet{franguridi2025raking}.

The remainder of the paper is organized as follows.
\cref{sec:framework} introduces the framework.
\cref{sec:testing} describes the test statistic and derives its asymptotic distribution.
\cref{sec:mc-sim} illustrates the size and power of our test in a Monte Carlo simulation.
\cref{sec:empirical} uses the test to assess the MAR assumption in a subset of the Understanding America Study (UAS) panel survey.
\cref{sec:conclusion} concludes.
Finally, the Appendix specializes our results to the cases of Gaussian data and discrete data and describes the construction of the cognition variable.

\section{Framework}\label{sec:framework}

Consider a two-period panel in which units may drop out of the sample in period 2.
Let $Z_{it}=(Y_{it},X_{it})$ be stacked outcomes and covariates for unit $i$ in period $t=1,2$.
In period 1, data $Z_{i1}$ for all the units $i=1,\dots,n_1$ are observed, while in period 2, data $Z_{i2}$ are observed only for units who stay in the sample, which we denote by $S_i=1$.
Using only data of the selected sample $\{i:\,S_i=1\}$ may lead to significant bias. To reduce this bias, researchers often assume that the selection probability
\begin{align*}
p(z_1,z_2)=\Pb(S_i=1|Z_{1i}=z_1,Z_{2i}=z_2)
\end{align*}
only depends on the observables $z_1$, not the potentially missing $z_2$.
This allows estimating the selection probability by running a (possibly nonparametric) regression of $S_i$ on $Z_{1i}$ and then using the estimate as the inverse weight for the selected sample.

The ``selection on observables'' (MAR) assumption may be implausible in many empirical settings.
Consider, for example, a panel survey.
It is likely that survey respondents decide whether to continue participating in the survey depending on data yet unobserved to the researcher.
For example, if $Z_{it}$ is respondent $i$'s income at time $t$, then the respondent $i$ may drop out of the sample after experiencing a negative income shock, i.e., when $Z_{i2}$ is significantly smaller than $Z_{i1}$ (before reporting $Z_{i2}$ to the researcher).

Without auxiliary information, the MAR assumption is untestable.
However, in survey panels, researchers often collect so-called \emph{refreshment samples}, i.e., samples that are designed to ``replace'' the missing units.
We denote a refreshment sample in period 2 by $Z_{i2}^r$, $i=1,\dots,n_r$.
What makes it a ``refreshment'' sample is being an independent sample from the target period-2 marginal distribution, i.e., $Z_{i2}^r \overset{d}{=} Z_{j2}$.
Our key observation is that, with a refreshment sample, the MAR assumption becomes testable.
This is because the selection probability $p(z_1,z_2)$ is identifiable with a refreshment sample under the assumption of \emph{additive nonignorability} (AN) introduced by \citet{hirano2001combining}:
\begin{align}
     p(z_1,z_2) = \exp(k_1(z_1)+k_2(z_2)) \text{ for some unknown functions } k_1,k_2. \label{eq:AN}
\end{align}
This assumption is significantly weaker than MAR and can be used to develop a computationally feasible test of the null hypothesis of MAR against the alternative hypothesis of AN.\footnote{We focus on the exponential link function $G(x)=\exp(x)$ because it leads to a computationally convenient procedure based on raking, as shown by \citet{franguridi2025raking}. In principle, all the theoretical results of this paper can be generalized to any smooth link function $G$.}
Under AN, \citet{hirano2001combining} showed that the joint density $f_\an(z_1,z_2)$ of $(Z_1,Z_2)$ is identified and can be recovered from the following three identified objects: the selected density $f^s(z_1,z_2) = f(z_1,z_2|S=1)$, the density $f_1(z_1)$ of $Z_1$ (identified with the first-period data where there is no attrition), and the density $f_2(z_2)$ of $Z_2$ (identified with the refreshment sample).
Specifically, $f_\an$ is the solution to the functional projection problem
\begin{align}
	\min_{\tilde f\in\Pi(f_1,f_2)} \KL(\tilde f,f^s), \label{eq:KL-projection}
\end{align}
where $\Pi(f_1,f_2)$ is the set of joint densities with (multivariate) marginals $f_1$ and $f_2$ and
\begin{align*}
\KL(f,f^s) = \E_{f} \left[ \log \frac{f(Z_1,Z_2)}{f^s(Z_1,Z_2)} \right]
\end{align*}
is the Kullback-Leibler (KL) divergence between $f$ and $f^s$.\footnote{This result also holds when the data contain a mix of discrete and continuous variables, in which case the densities involved should be understood as densities with respect to an appropriate product of Lebesgue and counting measures.}
In other words, $f_\an$ is the closest (in the KL sense) density to $f^s$ among densities with marginals $f_1$ and $f_2$.
Notice how the functions $k_1,k_2$ do not appear explicitly in this characterization of $f_\an$.
They do appear in the dual formulation of this problem, but this formulation is not required for this paper.

\section{Testing procedure}\label{sec:testing}

\subsection{Test statistic}


It is natural to seek evidence against MAR in a statistical distance between distributions $f_\an$ and $f_\mar$.
We consider the squared Hellinger distance due to its analytical tractability in our framework,\footnote{In principle, we could consider a test statistic based on any other statistical distance that depends smoothly on the densities of the input distributions, such as the Kullback-Leibler divergence, but the resulting formulas for the asymptotic distribution of such a statistic would be more complicated.}
\begin{align*}
    \hat T_n=\frac{1}{2} \int \left( \sqrt{\hat f_\an(z_1,z_2)}-\sqrt{\hat f_\mar(z_1,z_2)}\right)^2 \, d z_1 dz_2,
\end{align*}
where $\hat f_\an$ and $\hat f_\mar$ are estimates of $f_\an$ and $f_\mar$, respectively.

Estimation of $f_\mar$ is easy. Indeed, the Bayes formula and the MAR assumption imply
\begin{align*}
    f_\mar(z_1,z_2) = \frac{\Pb(S=1) f^s(z_1,z_2)}{\Pb(S=1|Z_1=z_1,Z_2=z_2)}  = \frac{\Pb(S=1) f^s(z_1,z_2)}{\Pb(S=1|Z_1=z_1)} .
\end{align*}
Integrating with respect to $z_2$ and taking into account that the first marginal of $f_\mar$ is $f_1$, we obtain
\begin{align*}
    f_\mar(z_1,z_2) = \frac{f_1(z_1) f^s(z_1,z_2)}{\int f^s(z_1,z_2)\, dz_2}.
\end{align*}
Hence, we can use the plug-in estimator
\[
    \hat f_\mar(z_1,z_2) = \frac{\hat f_1(z_1) \hat f^s(z_1,z_2)}{\int \hat f^s(z_1,z_2)\, dz_2},
\]
where $\hat f_1, \hat f_2, \hat f^s$ are chosen estimators of $f_1,f_2,f^s$, respectively.
Notice that this formula forces $\hat f_\mar$ to have the first marginal $\hat f_1$.
Incidentally, $\hat f_\mar$ can be obtained from $\hat f^s$ after just one iteration of the raking algorithm described in \Cref{sec:raking}.

Estimating $f_\an$ is a more difficult problem, and we defer its exposition to \Cref{sec:raking}.

\begin{remark}
    While the MAR assumption corresponds to the lack of dependence of the selection probability on $Z_2$, we may want to test its symmetric analog $\Pb(S=1|Z_1, Z_2)=\Pb(S=1|Z_2)$ instead. This can be handled easily because the target distribution in this case can be written as
    \begin{align*}
        f(z_1,z_2)= \frac{f_2(z_2) f^s(z_1,z_2)}{\int f^s(z_1,z_2) \, dz_1},
    \end{align*}
    where $f_2$ is identified using the refreshment sample. Besides, identification under this assumption is possible even without refreshment samples under additional parametric restrictions; see \citet{hausman1979attrition}.
\end{remark}

\subsection{Asymptotic distribution}

We now assume the densities $f_1,f_2,f^s$ are parametric and derive the asymptotic distribution of our test statistic under the null hypothesis of MAR.

Let $\lambda$ be a common dominating measure for $Z_1, Z_2$. For example, if both $Z_1$ and $Z_2$ contain one binary outcome and one continuous covariate, then $\lambda=\lambda_1\otimes\lambda_2$, where $\lambda_k$ is the product of the discrete uniform distribution on $\{0,1\}$ and the Lebesgue measure on $\R$.

We impose that the densities $f_1,f_2,$ and $f^s$ belong to parametric families $f_1(\cdot,\gamma_1)$, $f_2(\cdot,\gamma_2)$, and $f^s(\cdot,\gamma^s)$, where 
\[
\gamma = (\gamma_1',\gamma_2',(\gamma^s)')'\in\Gamma\subset\mathbb R^p
\]
is a finite-dimensional parameter with the true value $\gamma_0$.
For $j\in\{\mathrm{AN},\mathrm{MAR}\}$, let $f_j(\cdot,\gamma)$ denote the density induced by the parameter $\gamma$, i.e.,
\begin{align*}
    f_\mar(z_1,z_2,\gamma) &= \frac{f_1(z_1,z_2,\gamma_1) f^s(z_1,z_2,\gamma^s)}{\int f^s(z_1,z_2,\gamma^s) \lambda(dz_2)}, \\
    f_\an(\cdot,\gamma) &= \argmin_{\tilde f \in \Pi(f_1(\cdot,\gamma_1),f_2(\cdot,\gamma_2))} \KL(\tilde f, f^s(\cdot,\gamma^s)).
\end{align*}
Finally, define
\[
T(\gamma)
=
\frac12
\int
\left\{
\sqrt{f_{\mathrm{AN}}(z_1,z_2,\gamma)}
-
\sqrt{f_{\mathrm{MAR}}(z_1,z_2,\gamma)}
\right\}^2
\,d\lambda(z_1,z_2).
\]
We impose the following assumption.

\begin{assumption}\label{ass:hellinger-limit}
\begin{enumerate}[(i)]
\item
The parameter space $\Gamma$ contains an open neighborhood of
$\gamma_0$. For every $\gamma$ in this neighborhood, $f_{\mathrm{MAR}}(\cdot,\gamma)$ and $f_{\mathrm{AN}}(\cdot,\gamma)$ are well-defined densities with respect to the dominating measure $\lambda$.

\item
Under the null hypothesis of MAR,
\[
f_{\mathrm{AN}}(z,\gamma_0)
=
f_{\mathrm{MAR}}(z,\gamma_0)
=:f_0(z)
\qquad
\text{for $\lambda$-almost every }z.
\]

\item
For each $j\in\{\mathrm{AN},\mathrm{MAR}\}$, the map
$\gamma\mapsto f_j(\cdot,\gamma)$ is differentiable in quadratic
mean at $\gamma_0$, i.e., there exists a measurable vector-valued
function $s_j(\cdot,\gamma_0)\in L^2(f_0)^p$ such that, as $h\to0$,
\[
\int
\left[
\sqrt{f_j(z_1,z_2,\gamma_0+h)}
-
\sqrt{f_0(z_1,z_2)}
-
\frac12
\sqrt{f_0(z_1,z_2)}
s_j(z_1,z_2,\gamma_0)'h
\right]^2
\,d\lambda(z_1,z_2)
=
o(\|h\|^2).
\]
Notice that
\[
s_j(z_1,z_2,\gamma_0)
=
\nabla_\gamma\log f_j(z_1,z_2,\gamma_0)
\qquad
f_0\text{-almost surely}
\]
whenever the derivative on the right exists.

\item
The estimator $\widehat\gamma$ satisfies
\[
\sqrt n(\widehat\gamma-\gamma_0)
\rightsquigarrow
g_0,
\qquad
g_0\sim N(0,\Omega_0)
\]
for some positive semidefinite matrix $\Omega_0$.
\end{enumerate}
\end{assumption}

\Cref{ass:hellinger-limit}(i) imposes that $\gamma_0$ belongs to the interior of its parameter space $\Gamma$ and that slight deviations from $\gamma_0$ still allow well-defined densities $f_\mar$ and $f_\an$.
\Cref{ass:hellinger-limit}(ii) states that the null hypothesis of MAR means that the AN and MAR densities coincide.
\Cref{ass:hellinger-limit}(iii) is a standard condition on a smoothness of densities $f_\mar$ and $f_\an$ with respect to $\gamma$ around the truth. A simple sufficient condition is continuous differentiability of the square root of these densities along with suitable domination; see, e.g., Lemma 7.6 in \citet{van2000asymptotic}.
Finally, \Cref{ass:hellinger-limit}(iv) imposes a Gaussian limit on the estimator $\hat\gamma$, which holds, for example, if $\hat\gamma$ is the maximum likelihood estimator. Here $n=n_1+n_r$ is the combined sample size of the first period and the refreshment sample. \Cref{ass:hellinger-limit}(iv) implicitly requires that $\min(n_1,n_r) \to \infty$.

We are now ready to state our main result.

\begin{theorem}\label{thm:hellinger-limit}
Suppose \cref{ass:hellinger-limit} holds and define
\[
\widehat T_n=T(\widehat\gamma).
\]
Then, under the null hypothesis of MAR,
\[
n\widehat T_n
\rightsquigarrow
\frac12 g_0'H_0g_0,
\qquad
g_0\sim N(0,\Omega_0),
\]
where
\[
H_0
=
\frac14
\E_{f_0}
\left[
\left\{
s_{\mathrm{AN}}(Z_1,Z_2,\gamma_0)
-
s_{\mathrm{MAR}}(Z_1,Z_2,\gamma_0)
\right\}
\left\{
s_{\mathrm{AN}}(Z_1,Z_2,\gamma_0)
-
s_{\mathrm{MAR}}(Z_1,Z_2,\gamma_0)
\right\}'
\right].
\]
\end{theorem}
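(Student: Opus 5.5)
The plan is to write $T(\gamma)$ as half the squared $L^2(\lambda)$ norm of the difference of the two root-densities, and to expand each root-density around $\sqrt{f_0}$ using the differentiability in quadratic mean of \cref{ass:hellinger-limit}(iii). Setting $h=\widehat\gamma-\gamma_0$, we write for $j\in\{\mathrm{AN},\mathrm{MAR}\}$
\[
\sqrt{f_j(\cdot,\gamma_0+h)}=\sqrt{f_0}+\tfrac12\sqrt{f_0}\,s_j(\cdot,\gamma_0)'h+r_j(\cdot,h),
\qquad \|r_j(\cdot,h)\|_{L^2(\lambda)}=o(\|h\|),
\]
which is valid for all small $h$ by \cref{ass:hellinger-limit}(i). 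By \cref{ass:hellinger-limit}(ii), the leading terms $\sqrt{f_0}$ cancel in the difference. This gives
\[
\sqrt{f_{\mathrm{AN}}(\cdot,\gamma_0+h)}-\sqrt{f_{\mathrm{MAR}}(\cdot,\gamma_0+h)}
=\tfrac12\sqrt{f_0}\,(s_{\mathrm{AN}}-s_{\mathrm{MAR}})'h+(r_{\mathrm{AN}}-r_{\mathrm{MAR}}).
\]

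Next we square and integrate. The main term is
\[
\tfrac14\int f_0\,\{(s_{\mathrm{AN}}-s_{\mathrm{MAR}})'h\}^2\,d\lambda=h'H_0h.
\]
It is finite because $s_j\in L^2(f_0)^p$, and so it is $O(\|h\|^2)$. The cross term is controlled by Cauchy--Schwarz: it is at most $2\,(h'H_0h)^{1/2}\,\|r_{\mathrm{AN}}-r_{\mathrm{MAR}}\|_{L^2(\lambda)}=O(\|h\|)\,o(\|h\|)$. The pure remainder term is $o(\|h\|^2)$. Together these give the deterministic statement
\[
T(\gamma_0+h)=\tfrac12 h'H_0h+o(\|h\|^2)\quad\text{as }h\to0.
\]

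To pass to the random quantity, we use \cref{ass:hellinger-limit}(iv), which gives $\sqrt n\,h=\sqrt n(\widehat\gamma-\gamma_0)=O_P(1)$ and hence $h\to0$ in probability. The deterministic little-$o$ transfers to $o_P(\|h\|^2)$ by the standard lemma (van der Vaart, Lemma 2.12). Multiplying by $n$ then yields
\[
n\widehat T_n=\tfrac12\{\sqrt n(\widehat\gamma-\gamma_0)\}'H_0\{\sqrt n(\widehat\gamma-\gamma_0)\}+o_P(1).
\]
The continuous mapping theorem applied to the quadratic form $x\mapsto \tfrac12 x'H_0x$, followed by Slutsky's lemma, gives the limit $\tfrac12 g_0'H_0g_0$.

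The main obstacle is making sure the DQM expansion is uniform enough to be used at the random point $\widehat\gamma$. We handle this by defining $R(h)=\{T(\gamma_0+h)-\tfrac12h'H_0h\}/\|h\|^2$ for $h\neq0$ and $R(0)=0$. The deterministic expansion shows $R$ is continuous at $0$, so $R(\widehat\gamma-\gamma_0)\to0$ in probability, and multiplying by $n\|h\|^2=O_P(1)$ finishes the argument. No measurability or uniformity beyond (iii) is needed, since (iii) is already a statement as $h\to0$ along arbitrary sequences.
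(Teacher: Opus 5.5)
Your proposal is correct and follows the paper's proof: you expand both root-densities by differentiability in quadratic mean, cancel $\sqrt{f_0}$ under the null, bound the cross term by Cauchy--Schwarz, and finish with the $\sqrt n$-rate, the continuous mapping theorem and Slutsky. The one difference is that the paper writes $\|r_n\|_{L^2(\lambda)}=o_p(\|h_n\|)$ at the random point directly, whereas you justify that step explicitly via the function $R(h)$ continuous at $0$ (equivalently van der Vaart's Lemma 2.12).
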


\begin{proof}
Let $h_n=\widehat\gamma-\gamma_0$
and $\Delta s(z)= s_{\mathrm{AN}}(z,\gamma_0) - s_{\mathrm{MAR}}(z,\gamma_0)$.
By \cref{ass:hellinger-limit}(ii) and \cref{ass:hellinger-limit}(iii),
\[
\sqrt{f_{\mathrm{AN}}(\cdot,\gamma_0+h_n)}
-
\sqrt{f_{\mathrm{MAR}}(\cdot,\gamma_0+h_n)}
=
\frac12\sqrt{f_0}\,\Delta s'h_n+r_n,
\]
where
\[
\|r_n\|_{L^2(\lambda)}
=
o_p(\|h_n\|).
\]
Since $\Delta s\in L^2(f_0)^p$,
\[
\left\|
\sqrt{f_0}\,\Delta s'h_n
\right\|_{L^2(\lambda)}
=
O_p(\|h_n\|).
\]
We have
\begin{align*}
\widehat T_n
&=
\frac12
\left\|
\frac12\sqrt{f_0}\,\Delta s'h_n+r_n
\right\|_{L^2(\lambda)}^2 \\
&=
\frac18
h_n'
\E_{f_0}\!\left[\Delta s(Z)\Delta s(Z)'\right]
h_n
+
o_p(\|h_n\|^2) \\
&=
\frac12h_n'H_0h_n+o_p(n^{-1}),
\end{align*}
where the second equation uses the Cauchy--Schwarz inequality.
Therefore,
\[
n\widehat T_n
=
\frac12
\left\{\sqrt n(\widehat\gamma-\gamma_0)\right\}'
H_0
\left\{\sqrt n(\widehat\gamma-\gamma_0)\right\}
+
o_p(1).
\]
Using \cref{ass:hellinger-limit}(iv) and the continuous mapping theorem completes the proof.
\end{proof}

One convenient feature of the asymptotic distribution of our test statistic is that it does not depend on the second derivatives of $f_\an$ and $f_\mar$ with respect to $\gamma$, which would be the case for other distances such as the KL divergence.

\subsection{Estimator of the AN density}\label{sec:raking}

Here we describe a raking-based estimator of $f_\an$ developed in \citet{franguridi2025raking}.
To this end, we introduce some notation.
First, let $\I_1$ and $\I_2$ be integration operators defined for any integrable function $f=f(z_1,z_2)$ by
\begin{align*}
    [\I_1 f](z_2) = \int f(z_1,z_2)\, \lambda(dz_1), \qquad [\I_2 f](z_1) = \int f(z_1,z_2)\, \lambda(dz_2).
\end{align*}
Second, let $\Pi_1$ and $\Pi_2$ be the KL projection operators on the sets of distributions with the first marginal $f_1$ and the second marginal $f_2$, respectively,\footnote{See, e.g., \citet{csiszar1975divergence} for a proof that these operators indeed define the KL projection onto the set of distributions with one fixed marginal.} i.e., for any density $f$,
\begin{align*}
    [\Pi_1 f](z_1,z_2) = \frac{f_1(z_1) f(z_1,z_2)}{[\I_2 f](z_1)}, \\
    [\Pi_2 f](z_1,z_2) = \frac{f_2(z_2) f(z_1,z_2)}{[\I_1 f](z_2)}.
\end{align*}
Third, define the composition operator $\Pi = \Pi_2 \circ \Pi_1$, i.e.,
\begin{align}
        [\Pi f](z_1,z_2)= m & \left[ f_2(z_2,\gamma_2), \right. \notag \\
        &m \left(f_1(z_1,\gamma_1), f(z_1,z_2), \I_2 f(z_1,\cdot)\right), \notag \\
        &\left. \I_1 m\left( f_1(\cdot,\gamma_1), f(\cdot,z_2), \I_2 f(\cdot,\cdot) \right) \right],
\end{align}
where $m(a,b,c)=ab/c$.
The raking procedure (also called iterative proportional fitting or Sinkhorn's algorithm) is an iteration of the operator $\Pi$ starting from a given density function.
It turns out that under weak conditions, the raking procedure initialized at $f^s$ converges to the solution of the KL projection problem \eqref{eq:KL-projection}, i.e.,
\begin{align*}
    \lim_{T \to \infty} \Pi^{(T)} f^s = f_\an,
\end{align*}
where $\Pi^{(T)}$ is the T-fold iteration of $\Pi$ and the convergence is understood in the $L^1$ sense, see \citet{ruschendorf1995convergence,franguridi2025raking}.
In other words, raking starts with $f^s$ and alternates between projection onto the set of distributions with the first marginal $f_1$ and projection onto the set of distributions with the second marginal $f_2$.

Dropping the variables $z_1,z_2$ and reflecting the dependence of the previous iteration on $\gamma$, we can write the $(t+1)$-th raking iteration as
\begin{align}
    f^{(t+1)}(\gamma) = m & \left[ f_2(\gamma_2), \right. \notag \\
    &m \left(f_1(\gamma_1), f^{(t)}(\gamma), \I_2 f^{(t)}(\gamma) \right), \notag \\
    &\left. \I_1  m\left( f_1(\gamma_1), f^{(t)}(\gamma), \I_2 f^{(t)}(\gamma) \right) \right], \label{eq:raking-recursion}
\end{align}
with the initial condition
\begin{align*}
    f^{(0)}(\gamma) = f^s(\gamma).
\end{align*}

The raking estimator of \citet{franguridi2025raking} is then a sample analog of this procedure,
\begin{align*}
    \hat f_\an = \hat\Pi^{(T)} \hat f^s,
\end{align*}
where $\hat\Pi$ is the sample raking operator that uses estimators $\hat f_1,\hat f_2$ in place of $f_1,f_2$; $\hat f^s$ is an estimator of $f^s$; and $T$ is the number of iterations chosen in advance.\footnote{The speed of convergence of the raking iterations was studied by \citet{ruschendorf1995convergence}.}

\subsection{Critical values}

The key object needed to find the critical values of our test is the Hessian $H(\gamma_0)$.
For its plug-in estimation, we need the estimators of the derivatives of $f_\an$ and $f_\mar$ with respect to $\gamma$.
Fortunately, these estimators were developed in \citet{franguridi2025raking}.
In this section, we will formulate their methodology using the operator notation that is more concise and suitable for programming.

Denote the first derivatives of $m(a,b,c)$ with respect to its arguments by
\begin{align*}
    m_1(a,b,c) = b/c, \qquad m_2(a,b,c) = a/c, \qquad m_3(a,b,c) = -ab/c^2.
\end{align*}
Differentiating the raking recursion formula \eqref{eq:raking-recursion} with respect to $\gamma$ yields
\begin{align}
\nabla_\gamma f^{(t+1)}
=
m_1(f_2,h_t,\I_1 h_t)\,\nabla_\gamma f_2
+
m_2(f_2,h_t,\I_1 h_t)\,\nabla_\gamma h_t
+
m_3(f_2,h_t,\I_1 h_t)\I_1 \nabla_\gamma h_t, \label{eq:Df-recursion}
\end{align}
where
\begin{align*}
    h_t &=m(f_1,f^{(t)}, \I_2 f^{(t)}), \\
\nabla_\gamma h_t &= m_1(f_1,f^{(t)},\I_2 f^{(t)})\,\nabla_\gamma f_1
+
m_2(f_1,f^{(t)},\I_2 f^{(t)})\,\nabla_\gamma f^{(t)}
+
m_3(f_1,f^{(t)},\I_2 f^{(t)})\I_2 \nabla_\gamma f^{(t)}.
\end{align*}
The right-hand side of \eqref{eq:Df-recursion} depends on the fixed functions $f_1,f_2,\nabla_\gamma f_1,\nabla_\gamma f_2$ and the objects from the previous iteration $f^{(t)}$ and $\nabla_\gamma f^{(t)}$.
Therefore, we can write
\begin{align*}
    (f^{(t+1)}, \, \nabla_\gamma f^{(t+1)}) = \mathcal{R}_{\gamma_1,\gamma_2} (f^{(t)}, \, \nabla_\gamma f^{(t)}),
\end{align*}
where $\mathcal{R}_{\gamma_1,\gamma_2}$ is the operator mapping a pair of functions of $z_1,z_2$ to another such pair of functions and which only depends on the fixed functions $f_1,f_2,\nabla_\gamma f_1,\nabla_\gamma f_2$.
The subscript $\gamma_1,\gamma_2$ emphasizes that the operator is applied at specific values $\gamma_1,\gamma_2$ where the functions and their derivatives are being calculated.
Taking into account the initial condition $f^{(0)}=f^s$, we have
\begin{align*}
    (f^{(T)}, \, \nabla_\gamma f^{(T)}) = \mathcal{R}_{\gamma_1,\gamma_2}^{(T)} (f^s, \, \nabla_\gamma f^s).
\end{align*}
In particular, since $\hat f = f^{(T)}(\hat\gamma)$,
\begin{align*}
    (\hat f, \, \nabla_\gamma {\hat f}) = \mathcal{R}_{\hat \gamma_1,\hat \gamma_2}^{(T)} (f^s(\hat\gamma^s), \, \nabla_\gamma f^s(\hat\gamma^s)).
\end{align*}

Finally, since $f_\mar = m(f_1,f^s, \I_2 f^s)$, we have
\begin{align*}
    \nabla_\gamma f_\mar = m_1(f_1,f^s,\I_2 f^s)\,\nabla_\gamma f_1
    +
    m_2(f_1,f^s,\I_2 f^s)\,\nabla_\gamma f^s
    +
    m_3(f_1,f^s,\I_2 f^s)\I_2 \nabla_\gamma f^s.
\end{align*}

\section{Monte Carlo simulation}\label{sec:mc-sim}

In this section, we evaluate the finite-sample performance of our test in a set of Monte Carlo simulations for a Gaussian data-generating process.

We model the latent (target) distribution $f$ of $Z=(Z_1',Z_2')'$ as
\[
Z \sim N(\mu,\Sigma),
\]
where the mean \(\mu=0\) and the covariance matrix
\[
\Sigma=
\begin{pmatrix}
I_d & \rho I_d\\
\rho I_d & I_d
\end{pmatrix},
\qquad \rho=0.5.
\]
We then model the selection probability as
\[
\Pb(S=1\mid Z_1,Z_2)=\exp\!\left(\alpha-Z_1' B_1 Z_1 - Z_2' B_2' Z_2\right),
\]
with \(B_1=0\).
These choices imply that the selected density $f^s$ is also Gaussian.
The refreshment sample is drawn from the second marginal of $f$ and always has the same number of observations $n$ as the first-period sample.

We consider two DGPs, one for evaluating the size and the other for evaluating the power of our test.
The first DGP $H_0$ sets \(B_2=0\), leading to a constant selection probability, the simplest case satisfying the null hypothesis of MAR.
The second DGP $H_1$ sets
\[
B_2=0.05\,I_d,
\]
so that
\[
\Pb(S=1\mid Z_1,Z_2)= \exp\!\left( \alpha-0.05\,\|Z_2\|^2 \right).
\]
In this case, selection depends on the unobserved component \(Z_2\), generating a mechanism that violates MAR but satisfies AN.
Under both $H_0$ and $H_1$, the intercept \(\alpha\) is chosen to attain the target attrition rate \(\Pb(S=0)=0.2\).

To obtain the critical values for our test, we employ the following procedure that involves the nonparametric bootstrap for the parameter estimator
\[
\hat\gamma = (\hat\mu_1',\vech(\hat\Sigma_1),\hat\mu_2',\vech(\hat\Sigma_2),(\hat\mu^{\,s})',\vech(\hat\Sigma^{\,s}))'.
\]
First, we compute the estimate $\hat H(\hat\gamma)$ of the Hessian of the test statistic using the formulas in \Cref{sec:gaussian}.
Second, we generate bootstrap draws \(\hat \gamma^{*,b}\), $b=1,\dots, B$ and compute the bootstrap values of the test statistic
\begin{align*}
    \hat T^{*,b} = \frac{1}{2} (\hat\gamma^{*,b}-\hat\gamma)' \hat H(\hat\gamma) (\hat\gamma^{*,b}-\hat\gamma), \quad b=1,\dots, B.
\end{align*}
Finally, we reject the null hypothesis of MAR if the test statistic $\hat T$ exceeds the $1-\alpha$ sample quantile of $\hat T^{*,1},\dots,\hat T^{*,B}$. We use $B=300$ bootstrap samples.

\Cref{tab:mc} reports the empirical rejection rates of our test based on $1000$ Monte Carlo replications across a range of values of $n$ and $d$.
Under the null hypothesis $H_0$, the rejection frequencies are close to the nominal 10\% level across all sample sizes and dimensions, indicating that the test provides excellent size control.
Under the alternative hypothesis $H_1$, the test exhibits substantial power even at a relatively small sample size $n=2000$.
Power then increases rapidly with sample size, reaching $100\%$ when $n=10000$.
Overall, the results suggest that the test maintains accurate size while achieving high power against nonignorable attrition alternatives.

\begin{table}[ht]
    \centering
    \begin{tabular}{c|cc|cc}
    & \multicolumn{2}{c|}{$H_0$} & \multicolumn{2}{c}{$H_1$} \\
    $n$ & $d=2$ & $d=4$ & $d=2$ & $d=4$ \\
    \midrule
    2000  & 0.114 & 0.093 & 0.583 & 0.676 \\
    5000  & 0.101 & 0.116 & 0.937 & 0.994 \\
    10000 & 0.101 & 0.090 & 1.000 & 1.000 \\
    \bottomrule
    \end{tabular}
    \caption{Empirical rejection rates (size and power) based on $1000$ Monte Carlo replications.}
    \label{tab:mc}
\end{table}

\section{Empirical illustration}\label{sec:empirical}

We apply the framework developed here to data from the Understanding America Study (UAS), a panel of approximately 15,000 U.S. respondents. The UAS started in 2014 and is still growing. This means that new batches of respondents are regularly added, both to expand the panel and to replace respondents lost to attrition. The panel collects a large amount of information in a two-year cycle, including health, health behavior and health insurance, financial status and labor market behavior, cognition, and personality. The motivation for collecting a large amount of information across different domains is that it allows much richer analysis than when a dataset concentrates on only one dimension, e.g., economics or cognition.

The example we consider is how wealth holdings are affected by income, education, and cognition. The latter variable is an aggregate obtained from factor analysis of the outcomes of five different cognitive tests. Appendix \ref{sec:app-cognition} describes the construction of the cognition variable. As noted, core information is collected in a two-year cycle. UAS respondents can join at different times, which means that respondents may answer core surveys at different points in a two-year interval.
We consider answers of respondents in the wave that spans 4 July 2021 to 4 July 2023, which, for simplicity, we refer to as Wave 1, and the wave that spans 5 July 2023 to 6 July 2025, referred to as Wave 2.
The wave that is currently in the field will only be complete by July 2027 and hence is not suitable for the analysis in this paper.

To implement the test in this empirical example, the four explanatory variables are all discretized into three categories. Education is categorized as High School or less, Some College, or College or more. Income is categorized into three terciles, and so are cognition and financial wealth.
Wave 1 has 9,725 observations, of whom 7,432 also appear in Wave 2. In addition, Wave 2 has 2,264 fresh respondents (the refreshment sample).

Our test strongly rejects the null of no selection on unobserved Wave 2 variables at any reasonable level, with a p-value of $5\cdot 10^{-6}$. This result suggests that Wave 2 attrition is related to the variables in Wave 2. The test cannot tell us which of these unobserved Wave 2 variables has affected the probability of dropping out of the sample.
Responding to surveys is a cognitively demanding activity that takes increasing effort when cognition declines. Thus, one possible explanation is that respondents with declining cognition are more likely to have dropped out of the sample.

\section{Conclusion}\label{sec:conclusion}

The commonly imposed assumption of selection on observables (MAR) may be implausible in many empirical settings and is generally untestable.
However, this assumption becomes testable when a refreshment sample is available, which is the case for many panel surveys used in empirical research.
We propose to test the MAR assumption against a weaker alternative of additive nonignorability introduced in \citet{hirano2001combining}.
Our test statistic implicitly measures the distributional deviation of the refreshment sample from the inverse probability weighted selected sample in the second period.
When the input distributions are parameterized, the computation of the test statistic and the critical values can be performed using the raking-based recursive algorithm developed in \citet{franguridi2025raking}.

\bibliographystyle{ecta}
\bibliography{references}

\onehalfspacing
\frenchspacing
\newpage  
\part*{Appendix}
\appendix

\section{The Gaussian case}\label{sec:gaussian}

In this section, we specialize our testing procedure to the case where the distributions $f_1,f_2,f^s$ are assumed to be Gaussian.

\subsection{The test statistic}

The squared Hellinger distance between distributions $N(\mu_\an,\Sigma_\an)$ and $N(\mu_\mar,\Sigma_\mar)$ has a closed-form expression
\begin{align*}
    T =1-\frac{|\Sigma_\an|^{1/4}|\Sigma_\mar|^{1/4}
}{
\left|\frac{\Sigma_\an+\Sigma_\mar}{2}\right|^{1/2}
}
\exp\!\left(
-\frac18(\mu_\an-\mu_\mar)'
\left(\frac{\Sigma_\an+\Sigma_\mar}{2}\right)^{-1}
(\mu_\an-\mu_\mar)
\right).
\end{align*}

\subsection{Gaussian raking}

Since the raking iterations preserve Gaussianity, we define the parameter of the raking density at the $t$-th iteration as
\begin{align*}
\xi_t := (\mu_t, \Sigma_t),
\end{align*}
where $\mu_t$ and $\Sigma_t$ are the mean vector and the covariance matrix of $f^{(t)}$, respectively.
Then the raking recursion can be written as
\begin{align}
\xi_{t+1} = F(\xi_t,\eta) = H_2(H_1(\xi_t,\gamma_1),\gamma_2), \label{eq:gaussian-raking-iteration}
\end{align}
where $\eta = (\gamma_1',\gamma_2',(\gamma^s)')'$ is the vector of parameters of the input densities $f_1,f_2,f^s$, $H_1$ is the projection that enforces the first marginal, and $H_2$ is the projection that enforces the second marginal.

Let $\xi=(\mu,\Sigma)$.
It is straightforward to show that 
\begin{align*}
H_1(\xi,\gamma_1^\star)
=
\left(
\binom{\mu_1^\star}{\mu_2+A(\xi)\bigl(\mu_1^\star-\mu_1\bigr)},
\begin{pmatrix}
\Sigma_1^\star & \Sigma_1^\star A(\xi)'\\
A(\xi)\Sigma_1^\star & V(\xi)+A(\xi)\Sigma_1^\star A(\xi)'
\end{pmatrix}
\right),
\end{align*}
where $\gamma_1^\star=(\mu_1^\star,\Sigma_1^\star)$, and
\begin{align*}
    A(\xi):=\Sigma_{21}\Sigma_{11}^{-1},
\qquad
V(\xi):=\Sigma_{22}-\Sigma_{21}\Sigma_{11}^{-1}\Sigma_{12}.
\end{align*}
Similarly,
\[
H_2\bigl(\xi,\gamma_2^\star\bigr)
=
\left(
\begin{pmatrix}
\mu_1 + B(\xi)(\mu_2^\star-\mu_2)\\
\mu_2^\star
\end{pmatrix},
\begin{pmatrix}
U(\xi)+B(\xi)\Sigma_2^\star B(\xi)' &
B(\xi)\Sigma_2^\star\\
\Sigma_2^\star B(\xi)' &
\Sigma_2^\star
\end{pmatrix}
\right),
\]
where $\gamma_2^\star=(\mu_2^\star,\Sigma_2^\star)$ and
\[
B(\xi)
:=
\Sigma_{12}\Sigma_{22}^{-1},
\qquad
U(\xi)
:=
\Sigma_{11}
-
\Sigma_{12}\Sigma_{22}^{-1}\Sigma_{21}.
\]

\subsection{Derivatives of Gaussian raking parameters}

The Gaussian score is
\begin{align*}
    s_{t}(z)
:=\nabla_{\gamma_{t}}\log f^{(t)}(z)
=
\begin{pmatrix}
\Sigma_{t}^{-1}(z-\mu_{t})\\[2mm]
\frac12 D_d' \operatorname{vec}\!\Big(
\Sigma_{t}^{-1}\big[(z-\mu_{t})(z-\mu_{t})'-\Sigma_{t}\big]\Sigma_{t}^{-1}
\Big)
\end{pmatrix},
\end{align*}
where $D_d$ is the duplication matrix of order $d$, i.e., the unique matrix such that
\[
\operatorname{vec}(A) = D_d \operatorname{vech}(A)
\]
for any $d \times d$ symmetric matrix $A$.
Hence, the desired score is
\begin{align*}
\nabla_\eta \log f^{(t)}(z)
=
J_t(\eta)'
\begin{pmatrix}
\Sigma_t^{-1}(z-\mu_t)\\[2mm]
\frac12 D_d' \operatorname{vec}\!\Big(
\Sigma_t^{-1}\big[(z-\mu_t)(z-\mu_t)'-\Sigma_t\big]\Sigma_t^{-1}
\Big),
\end{pmatrix}
\end{align*}
where $J_t$ is generated by the recursion below, starting from $J_0=(0,0,I)$.

\begin{proposition}[Jacobian recursion for Gaussian raking]
Define
\[
J_t := \frac{\partial \xi_t}{\partial \eta'}
=
\begin{pmatrix}
\frac{\partial \xi_t}{\partial \gamma_1'} &
\frac{\partial \xi_t}{\partial \gamma_2'} &
\frac{\partial \xi_t}{\partial (\gamma^s)'}
\end{pmatrix}.
\]
Then \(J_t\) satisfies
\[
J_{t+1}
=
H_{2,\xi}\,H_{1,\xi_t}\,J_t
+
\begin{pmatrix}
H_{2,\xi}\,H_{1,\gamma_1} &
H_{2,\gamma_2} &
0
\end{pmatrix},
\]
with $J_0=
\begin{pmatrix}
0 & 0 & I
\end{pmatrix}$.
\end{proposition}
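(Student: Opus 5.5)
The plan is to apply the chain rule to the recursion \eqref{eq:gaussian-raking-iteration}, $\xi_{t+1}=H_2(H_1(\xi_t,\gamma_1),\gamma_2)$, treating $\xi_t=\xi_t(\eta)$ as a differentiable function of $\eta=(\gamma_1',\gamma_2',(\gamma^s)')'$ and arguing by induction on $t$.

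\textbf{Base case.} Since $f^{(0)}=f^s$, we have $\xi_0=\gamma^s$, which does not depend on $\gamma_1$ or $\gamma_2$. Hence $J_0=\partial\xi_0/\partial\eta'=(0,0,I)$. Here $\xi$ and $\gamma^s$ are understood in the same $(\mu,\vech\Sigma)$ coordinates, so the identity block is square.

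\textbf{Inductive step.} Suppose $\xi_t$ is differentiable in $\eta$ with Jacobian $J_t$. Write $\zeta_t=H_1(\xi_t,\gamma_1)$, which depends on $\eta$ through $\xi_t$ and directly through $\gamma_1$. By the chain rule,
\[
\frac{\partial\zeta_t}{\partial\eta'}=H_{1,\xi_t}J_t+\begin{pmatrix}H_{1,\gamma_1} & 0 & 0\end{pmatrix},
\]
where $H_{1,\xi_t}$ and $H_{1,\gamma_1}$ are the partial Jacobians of $H_1$ evaluated at $(\xi_t,\gamma_1)$. Next, $\xi_{t+1}=H_2(\zeta_t,\gamma_2)$ depends on $\eta$ through $\zeta_t$ and directly through $\gamma_2$. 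Applying the chain rule again gives
\[
J_{t+1}=H_{2,\xi}\frac{\partial\zeta_t}{\partial\eta'}+\begin{pmatrix}0 & H_{2,\gamma_2} & 0\end{pmatrix},
\]
with $H_{2,\xi}$ evaluated at $(\zeta_t,\gamma_2)$. Substituting the first display into the second yields exactly the claimed recursion.

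\textbf{Differentiability of $H_1$ and $H_2$.} The main work is justifying that $H_1$ and $H_2$ are differentiable along the iterates. The closed forms given above for $H_1$ and $H_2$ are rational in the entries of $\Sigma$ and $\mu$, apart from the inverses $\Sigma_{11}^{-1}$ and $\Sigma_{22}^{-1}$. They are therefore smooth wherever these blocks are invertible. Positive definiteness is preserved by each projection: for $H_1$, the new covariance has block form with Schur complement $V(\xi)\succ0$ and $\Sigma_1^\star\succ0$, so it is positive definite, and $H_2$ is symmetric. Inducting from $\Sigma^s\succ0$, all iterates stay in the open cone of positive definite matrices, and the chain rule applies at each step.

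The only real subtlety is bookkeeping. One must keep the vech parametrization consistent, so that the Jacobians are taken with respect to $(\mu,\vech\Sigma)$ and $H_{1,\gamma_1}$ has the right dimensions. The explicit matrix derivatives of $A(\xi)$, $V(\xi)$, $B(\xi)$, $U(\xi)$, obtained via $d\Sigma^{-1}=-\Sigma^{-1}d\Sigma\,\Sigma^{-1}$, are not needed for the proposition itself, since it is stated in terms of the abstract partial Jacobians.
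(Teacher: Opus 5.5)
Your proposal is correct and follows essentially the same route as the paper: the chain rule applied to $\xi_{t+1}=H_2(H_1(\xi_t,\gamma_1),\gamma_2)$, with the base case $J_0=(0,0,I)$ coming from $\xi_0=\gamma^s$. You apply the chain rule in two stages through $\zeta_t=H_1(\xi_t,\gamma_1)$ and add a positive-definiteness argument for differentiability, which only makes explicit what the paper's one-line derivation $J_{t+1}=F_{\xi_t}J_t+F_\eta$ takes for granted.
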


\begin{proof}
Applying the chain rule with respect to \(\eta\) to the raking iteration \eqref{eq:gaussian-raking-iteration} gives
\[
\frac{\partial \xi_{t+1}}{\partial \eta'}
=
\frac{\partial F(\xi_t,\eta)}{\partial \xi_t'}
\frac{\partial \xi_t}{\partial \eta'}
+
\frac{\partial F(\xi_t,\eta)}{\partial \eta'}.
\]
With \(J_t=\partial \xi_t/\partial \eta'\), this becomes
\[
J_{t+1}=F_{\xi_t}J_t+F_\eta.
\]

Since \(F=H_2\circ H_1\), another application of the chain rule yields
\[
F_{\xi_t}=H_{2,\xi}\,H_{1,\xi_t}.
\]
Moreover, \(\gamma_1\) enters only through the first projection, \(\gamma_2\) enters only through the second projection, and \(\gamma^s\) enters only through the initial condition. Therefore,
\[
F_\eta
=
\begin{pmatrix}
H_{2,\xi}\,H_{1,\gamma_1} &
H_{2,\gamma_2} &
0
\end{pmatrix}.
\]
Finally, we have
\[
\xi_0=\gamma^s,
\]
and hence
\[
J_0=
\frac{\partial \xi_0}{\partial \eta'}
=
\begin{pmatrix}
0 & 0 & I
\end{pmatrix}.
\]
\end{proof}

\begin{proposition}[Derivatives of the Gaussian projection maps]
Let
\[
\xi=(\mu,\Sigma), \qquad
\mu=
\begin{pmatrix}
\mu_1\\
\mu_2
\end{pmatrix},
\qquad
\Sigma=
\begin{pmatrix}
\Sigma_{11} & \Sigma_{12}\\
\Sigma_{21} & \Sigma_{22}
\end{pmatrix},
\]
and let the target marginal parameters be
\[
\gamma_1=(\bar\mu_1,\bar\Sigma_1),
\qquad
\gamma_2=(\bar\mu_2,\bar\Sigma_2).
\]
Define the first projection map
\[
H_1(\xi,\gamma_1)=:\xi^{(1)}=(\mu^{(1)},\Sigma^{(1)}),
\]
with
\[
A=\Sigma_{21}\Sigma_{11}^{-1},
\qquad
V=\Sigma_{22}-\Sigma_{21}\Sigma_{11}^{-1}\Sigma_{12},
\]
so that
\[
\mu^{(1)}=
\begin{pmatrix}
\bar\mu_1\\
\mu_2+A(\bar\mu_1-\mu_1)
\end{pmatrix},
\qquad
\Sigma^{(1)}=
\begin{pmatrix}
\bar\Sigma_1 & \bar\Sigma_1 A'\\
A\bar\Sigma_1 & V+A\bar\Sigma_1A'
\end{pmatrix}.
\]
Define the second projection map
\[
H_2(\xi^{(1)},\gamma_2)=:(\mu^{(2)},\Sigma^{(2)}),
\]
with
\[
B=\Sigma_{12}^{(1)}(\Sigma_{22}^{(1)})^{-1},
\qquad
U=\Sigma_{11}^{(1)}-\Sigma_{12}^{(1)}(\Sigma_{22}^{(1)})^{-1}\Sigma_{21}^{(1)},
\]
so that
\[
\mu^{(2)}=
\begin{pmatrix}
\mu_1^{(1)}+B(\bar\mu_2-\mu_2^{(1)})\\
\bar\mu_2
\end{pmatrix},
\qquad
\Sigma^{(2)}=
\begin{pmatrix}
U+B\bar\Sigma_2B' & B\bar\Sigma_2\\
\bar\Sigma_2B' & \bar\Sigma_2
\end{pmatrix}.
\]
Then the differentials of the auxiliary quantities are
\begin{align}
dA
&=
(d\Sigma_{21}-A\,d\Sigma_{11})\Sigma_{11}^{-1},
\label{eq:dA}
\\
dV
&=
d\Sigma_{22}
-d\Sigma_{21}\Sigma_{11}^{-1}\Sigma_{12}
-A\,d\Sigma_{12}
+A\,d\Sigma_{11}\Sigma_{11}^{-1}\Sigma_{12},
\label{eq:dV}
\\
dB
&=
(d\Sigma_{12}^{(1)}-B\,d\Sigma_{22}^{(1)})(\Sigma_{22}^{(1)})^{-1},
\label{eq:dB}
\\
dU
&=
d\Sigma_{11}^{(1)}
-dB\,\Sigma_{21}^{(1)}
-B\,d\Sigma_{21}^{(1)}.
\label{eq:dU}
\end{align}
Consequently, the derivatives of $H_1$ and $H_2$ are given by
\begin{align}
d\mu^{(1)}
&=
\begin{pmatrix}
d\bar\mu_1\\
d\mu_2+dA(\bar\mu_1-\mu_1)+A(d\bar\mu_1-d\mu_1)
\end{pmatrix},
\label{eq:dmu1}
\\
d\Sigma^{(1)}
&=
\begin{pmatrix}
d\bar\Sigma_1 &
d\bar\Sigma_1A' + \bar\Sigma_1 dA'\\
dA\,\bar\Sigma_1 + A\,d\bar\Sigma_1 &
dV + dA\,\bar\Sigma_1A' + A\,d\bar\Sigma_1A' + A\bar\Sigma_1 dA'
\end{pmatrix},
\label{eq:dSigma1}
\\
d\mu^{(2)}
&=
\begin{pmatrix}
d\mu_1^{(1)} + dB(\bar\mu_2-\mu_2^{(1)}) + B(d\bar\mu_2-d\mu_2^{(1)})\\
d\bar\mu_2
\end{pmatrix},
\label{eq:dmu2}
\\
d\Sigma^{(2)}
&=
\begin{pmatrix}
dU + dB\,\bar\Sigma_2B' + B\,d\bar\Sigma_2\,B' + B\bar\Sigma_2 dB' &
dB\,\bar\Sigma_2 + B\,d\bar\Sigma_2\\
d\bar\Sigma_2B' + \bar\Sigma_2 dB' &
d\bar\Sigma_2
\end{pmatrix}.
\label{eq:dSigma2}
\end{align}
The Jacobians of $H_1$ and $H_2$ are obtained by collecting the coefficients of the differentials in
\eqref{eq:dmu1}--\eqref{eq:dSigma2}, after converting $d\Sigma$ to $d\vech(\Sigma)$.
\end{proposition}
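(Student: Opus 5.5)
The plan is to compute the differentials directly, using the matrix-calculus rule $d(X^{-1}) = -X^{-1}\,dX\,X^{-1}$ and the product rule. Everything is taken as a perturbation of the entries of $\xi$ and the target parameters $\gamma_1,\gamma_2$. No facts about raking are needed beyond the explicit formulas for $H_1$ and $H_2$ already displayed, so the proposition reduces to a sequence of elementary differentiations.

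\textbf{Auxiliary quantities.} First I will treat the auxiliary quantities.

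For $A=\Sigma_{21}\Sigma_{11}^{-1}$, the product rule gives
\[
dA = d\Sigma_{21}\Sigma_{11}^{-1} - \Sigma_{21}\Sigma_{11}^{-1}\,d\Sigma_{11}\,\Sigma_{11}^{-1}.
\]
Factoring out $\Sigma_{11}^{-1}$ on the right yields \eqref{eq:dA}.

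For $V=\Sigma_{22}-A\Sigma_{12}$, I have $dV = d\Sigma_{22} - dA\,\Sigma_{12} - A\,d\Sigma_{12}$. Substituting \eqref{eq:dA} produces the terms $-d\Sigma_{21}\Sigma_{11}^{-1}\Sigma_{12} + A\,d\Sigma_{11}\Sigma_{11}^{-1}\Sigma_{12}$, which is exactly \eqref{eq:dV}.

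The formula \eqref{eq:dB} follows in the same way as \eqref{eq:dA}, with the superscript-$(1)$ blocks and the roles of indices $1$ and $2$ swapped.

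For $U$, the key is to write $U=\Sigma^{(1)}_{11}-B\Sigma^{(1)}_{21}$, which holds because $B=\Sigma^{(1)}_{12}(\Sigma^{(1)}_{22})^{-1}$. The product rule then gives \eqref{eq:dU} directly, without expanding $dB$.

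\textbf{Projection maps.} Next I differentiate the displayed block formulas for $\mu^{(1)},\Sigma^{(1)},\mu^{(2)},\Sigma^{(2)}$ term by term.

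For $\mu^{(1)}$, the second block is $\mu_2 + A(\bar\mu_1-\mu_1)$. Its differential is $d\mu_2 + dA(\bar\mu_1-\mu_1) + A(d\bar\mu_1 - d\mu_1)$, which gives \eqref{eq:dmu1}.

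For $\Sigma^{(1)}$, each block is a product of factors, so the product rule applies blockwise. For example, $d(A\bar\Sigma_1 A') = dA\,\bar\Sigma_1A' + A\,d\bar\Sigma_1A' + A\bar\Sigma_1\,dA'$. This gives \eqref{eq:dSigma1}.

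The formulas \eqref{eq:dmu2} and \eqref{eq:dSigma2} are obtained identically. Here $(\mu^{(1)},\Sigma^{(1)})$ plays the role of the input, so $d\mu^{(1)}_1$, $d\mu^{(1)}_2$ and $d\Sigma^{(1)}_{ij}$ appear as incoming differentials. These are then chained through \eqref{eq:dmu1}--\eqref{eq:dSigma1}.

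\textbf{Jacobians.} Finally, the Jacobians follow by linearity. Each differential is linear in $(d\mu, d\Sigma, d\bar\mu_k, d\bar\Sigma_k)$. Applying $\operatorname{vec}$ together with the identities $\operatorname{vec}(XYZ)=(Z'\otimes X)\operatorname{vec}(Y)$ and $\operatorname{vec}(d\Sigma)=D_d\,d\vech(\Sigma)$, and then $\vech$ on the output side, yields matrices of coefficients. These matrices are $H_{1,\xi}$, $H_{1,\gamma_1}$, $H_{2,\xi}$ and $H_{2,\gamma_2}$.

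\textbf{Main obstacle.} There is no conceptual difficulty in any of these steps. The only delicate point is bookkeeping in the vectorization: transposed differentials $dA'$ require a commutation matrix, and the symmetry of $d\Sigma$ must be handled consistently, using $d\Sigma_{12} = d\Sigma_{21}'$ and the duplication matrix. To keep this manageable, I will leave the results in differential form, which is how the proposition states them, and only remark on how they convert to Jacobians.
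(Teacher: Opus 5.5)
Your proposal is correct and follows essentially the same route as the paper: direct differentiation with $d(M^{-1})=-M^{-1}(dM)M^{-1}$ and the product rule, applied blockwise, with the $H_2$ formulas handled the same way as the $H_1$ formulas. The only differences are cosmetic. You obtain $dV$ by writing $V=\Sigma_{22}-A\Sigma_{12}$ and substituting \eqref{eq:dA}, where the paper differentiates the three-factor product directly. You also make explicit the identity $U=\Sigma^{(1)}_{11}-B\Sigma^{(1)}_{21}$ behind \eqref{eq:dU}, a step the paper leaves as ``analogous.''
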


\begin{proof}
We prove the formulas for $H_1$; the derivation for $H_2$ is analogous.

First, note that for any invertible matrix $M$,
\[
d(M^{-1})=-M^{-1}(dM)M^{-1}.
\]
Since
\[
A=\Sigma_{21}\Sigma_{11}^{-1},
\]
we have
\[
dA
=
d\Sigma_{21}\Sigma_{11}^{-1}
+\Sigma_{21}d(\Sigma_{11}^{-1})
=
d\Sigma_{21}\Sigma_{11}^{-1}
-\Sigma_{21}\Sigma_{11}^{-1}(d\Sigma_{11})\Sigma_{11}^{-1},
\]
establishing \eqref{eq:dA}. Similarly,
\[
V=\Sigma_{22}-\Sigma_{21}\Sigma_{11}^{-1}\Sigma_{12},
\]
so
\begin{align*}
dV
&=
d\Sigma_{22}
-d\Sigma_{21}\Sigma_{11}^{-1}\Sigma_{12}
-\Sigma_{21}d(\Sigma_{11}^{-1})\Sigma_{12}
-\Sigma_{21}\Sigma_{11}^{-1}d\Sigma_{12} \\
&=
d\Sigma_{22}
-d\Sigma_{21}\Sigma_{11}^{-1}\Sigma_{12}
+A\,d\Sigma_{11}\Sigma_{11}^{-1}\Sigma_{12}
-A\,d\Sigma_{12},
\end{align*}
establishing \eqref{eq:dV}.

Differentiating the expression for $\mu^{(1)}$ yields \eqref{eq:dmu1}. Differentiating the block representation of $\Sigma^{(1)}$ and using the product rule together with \eqref{eq:dA}--\eqref{eq:dV} gives \eqref{eq:dSigma1}.
\end{proof}

\section{The discrete case}\label{sec:discrete}

In this section, we specialize our testing procedure to the case where the distributions $f_1,f_2,f^s$ are finitely discrete.

\subsection{Setup and notation}

Without loss of generality, we assume that \(Z_1\) and \(Z_2\) have finite supports \(\{1,\ldots,A\}\) and \(\{1,\ldots,B\}\), respectively.
Define
\[
f_1(a)=\Pb(Z_1=a),
\qquad
f_2(b)=\Pb(Z_2=b),
\qquad
f_s(a,b)=\Pb(Z_1=a,Z_2=b\mid S=1).
\]
Let
\[
f_{s,1}(a)=\sum_{b=1}^B f_s(a,b)
\]
and define
\[
f_s(b\mid a)
=
\frac{f_s(a,b)}{f_{s,1}(a)}
=
\Pb(Z_2=b\mid Z_1=a,S=1).
\]
Denote the corresponding \(A\times B\) matrix by
\[
P_s=\bigl(f_s(b\mid a)\bigr)_{a,b}
\]
and denote the $a$-th row of $P_s$ by
\[
c_a
=
\begin{pmatrix}
f_s(1\mid a)\\
\vdots\\
f_s(B\mid a)
\end{pmatrix}
.
\]
Finally, denote
\[
\pi_a=\Pb(S=1\mid Z_1=a).
\]

The MAR joint distribution is
\[
f_{\mathrm{MAR}}(a,b)
=
f_1(a)f_s(b\mid a),
\]
and its second marginal is
\[
t_b
=
\sum_{a=1}^A f_1(a)f_s(b\mid a),
\qquad
t=P_s'f_1.
\]
The hypothesis $H_0: f_{\mar} = f_{\an}$ can be formulated equivalently as
\[
H_0: t=f_2.
\]

Write
\[
F_s=\bigl(f_s(a,b)\bigr)_{a,b},
\qquad
D_1=\operatorname{diag}(f_1),
\qquad
D_2=\operatorname{diag}(f_2),
\]
and define
\[
\gamma
=
\begin{pmatrix}
f_1\\
f_2\\
\operatorname{vec}(F_s)
\end{pmatrix}.
\]
Notice that this is a redundant parameterization because
\[
\sum_{a=1}^A f_1(a)=\sum_{b=1}^B f_2(b) = \sum_{a=1}^A \sum_{b=1}^B f_s(a,b)=1.
\]
We consider the standard frequency estimators of the distributions $f_1,f_2,f_s$, 
\begin{align*}
    \widehat f_1(a) &= \frac1n\sum_{i=1}^n \mathbf 1(Z_{1i}=a),\\
    \widehat f_2(b) &= \frac1{n_r}\sum_{j=1}^{n_r} \mathbf 1(Z_{2j}^r=b), \\
    \widehat f_s(a,b) &= \frac{
        \sum_{i=1}^n
        S_i\mathbf 1(Z_{1i}=a, Z_{2i}=b)
        }{
        \sum_{i=1}^n
        S_i
        }.
\end{align*}
The estimator of the second marginal under MAR is
\[
\widehat t
=
\sum_{a=1}^A\widehat f_1(a)\widehat c_a
=
\widehat P_s'\widehat f_1,
\]
where \(e_b\) is the \(b\)-th standard basis vector in \(\mathbb R^B\) and
\[
\widehat c_a
=
\frac{
\sum_{i=1}^n
S_i\mathbf 1(Z_{1i}=a)e_{Z_{2i}}
}{
\sum_{i=1}^n
S_i\mathbf 1(Z_{1i}=a)
}.
\]

Given an estimator
\[
\widehat\gamma
=
\begin{pmatrix}
\widehat f_1\\
\widehat f_2\\
\operatorname{vec}(\widehat F_s)
\end{pmatrix},
\]
let \(\widehat f_{\mathrm{AN}}=f_{\mathrm{AN}}(\widehat\gamma)\) and
\[
\widehat f_{\mathrm{MAR}}(a,b)
=
\widehat f_1(a)\widehat f_s(b\mid a).
\]
The test statistic is
\[
\widehat T_n
=
\frac12
\sum_{a=1}^A\sum_{b=1}^B
\left(
\sqrt{\widehat f_{\mathrm{AN}}(a,b)}
-
\sqrt{\widehat f_{\mathrm{MAR}}(a,b)}
\right)^2.
\]

\subsection{The asymptotic distribution}

\begin{proposition}
\label{thm:discrete-hellinger-null}
Suppose that
\[
\frac{n}{n_r}\longrightarrow\kappa\in(0,\infty).
\]
Let
\[
\mathcal S = D_2-P_s'D_1P_s = \sum_{a=1}^A f_1(a) \left\{ \operatorname{diag}(c_a)-c_ac_a' \right\}
\]
and
\[
V=V_{\mathrm{pan}}+\kappa\Sigma_2,
\]
where
\[
\Sigma_2=D_2-f_2f_2'
\]
and
\[
V_{\mathrm{pan}} =
\sum_{a=1}^A
f_1(a)(c_a-f_2)(c_a-f_2)'
+
\sum_{a=1}^A
\frac{f_1(a)}{\pi_a}
\left\{
\operatorname{diag}(c_a)-c_ac_a'
\right\}.
\]
Then, under the null of MAR,
\[
n\widehat T_n
\rightsquigarrow
\frac18G'\mathcal S^+G,
\qquad
G\sim N(0,V),
\]
where \(\mathcal S^+\) is the Moore--Penrose inverse of
\(\mathcal S\).
\end{proposition}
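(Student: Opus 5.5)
Under the null, $t=P_s'f_1=f_2$ and $f_{\mathrm{AN}}=f_{\mathrm{MAR}}=:f_0$. The plan is to expand $\widehat f_{\mathrm{AN}}$ around $\widehat f_{\mathrm{MAR}}$ rather than going through \cref{thm:hellinger-limit} with the redundant parameterization $\gamma$; this gives the Moore--Penrose form of the limit directly.

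\textbf{Step 1: a convenient representation of $\widehat f_{\mathrm{AN}}$.} The matrix $\widehat f_{\mathrm{MAR}}$ is obtained from $\widehat F_s$ by rescaling rows. The KL projection onto $\Pi(\widehat f_1,\widehat f_2)$ is invariant to row and column rescalings of the starting point. Hence $\widehat f_{\mathrm{AN}}$ is also the KL projection of $\widehat f_{\mathrm{MAR}}$, and it has the form
\[
\widehat f_{\mathrm{AN}}(a,b)=\widehat f_{\mathrm{MAR}}(a,b)\exp(\hat u_a+\hat v_b).
\]
By construction, $\widehat f_{\mathrm{MAR}}$ already has first marginal $\widehat f_1$. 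Its second marginal is $\hat t$, so it misses the target only through
\[
\hat\delta:=\widehat f_2-\hat t=O_p(n^{-1/2}).
\]

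\textbf{Step 2: linearize the marginal constraints.} I will show that $(\hat u,\hat v)=O_p(n^{-1/2})$, after normalizing the one-dimensional redundancy $u\mapsto u+c$, $v\mapsto v-c$. For this I apply the implicit function theorem to the Sinkhorn dual equations at the point $(u,v)=0$, $\delta=0$. The Jacobian there is nonsingular modulo the redundancy, assuming the support of $f_0$ is connected (e.g.\ all cells positive). Expanding $\exp$ to first order:
\begin{itemize}
\item the row constraints give $\hat u_a=-c_a'\hat v+o_p(n^{-1/2})$;
\item the column constraints give $D_2\hat v+P_s'D_1\hat u=\hat\delta+o_p(n^{-1/2})$.
\end{itemize}
Combining the two yields
\[
\mathcal S\hat v=\hat\delta+o_p(n^{-1/2}).
\]
Both sides are orthogonal to $\mathbf 1$: indeed $\mathcal S\mathbf 1=0$ and $\mathbf 1'\hat\delta=0$. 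Under connectivity, $\ker\mathcal S=\operatorname{span}(\mathbf 1)$, because $v'\mathcal S v=\sum_a f_1(a)\operatorname{Var}_{c_a}(v)$. Therefore $\hat v=\mathcal S^+\hat\delta+o_p(n^{-1/2})$ modulo constants.

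\textbf{Step 3: expand the Hellinger distance.} Since $\sqrt{e^x}-1=x/2+O(x^2)$,
\[
\widehat T_n=\tfrac18\sum_{a,b}f_0(a,b)(\hat u_a+\hat v_b)^2+o_p(n^{-1}).
\]
Substituting $\hat u_a=-c_a'\hat v$, the sum becomes $\sum_a f_1(a)\operatorname{Var}_{c_a}(\hat v)=\hat v'\mathcal S\hat v$. Hence
\[
n\widehat T_n=\tfrac18(\sqrt n\hat\delta)'\mathcal S^+(\sqrt n\hat\delta)+o_p(1).
\]

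\textbf{Step 4: the CLT for $\sqrt n\hat\delta$.} Write
\[
\hat t-t=\sum_a(\widehat f_1(a)-f_1(a))c_a+\sum_a f_1(a)(\hat c_a-c_a)+o_p(n^{-1/2}).
\]
The three sources of randomness contribute as follows.
\begin{itemize}
\item The multinomial term from $\widehat f_1$ has covariance $\sum_a f_1(a)(c_a-t)(c_a-t)'$, using $\sum_a f_1(a)c_a=t=f_2$.
\item Conditionally on the first-period data, $\hat c_a$ is a multinomial frequency based on approximately $nf_1(a)\pi_a$ stayers. This gives the covariance $\sum_a \frac{f_1(a)}{\pi_a}(\operatorname{diag}(c_a)-c_ac_a')$. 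It is asymptotically independent of the first term, via a conditional CLT on the first-period data.
\item $\widehat f_2$ comes from the independent refreshment sample, so with $n/n_r\to\kappa$ it contributes $\kappa\Sigma_2$.
\end{itemize}
Hence $\sqrt n\hat\delta\rightsquigarrow G\sim N(0,V)$. The limit $G$ lies in $\mathbf 1^\perp=\operatorname{range}(\mathcal S)$, and the map $x\mapsto x'\mathcal S^+x$ is continuous. The continuous mapping theorem then concludes.

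\textbf{Main obstacle.} I expect the main difficulty to be Step 2, specifically the rigorous $O_p(n^{-1/2})$ control of the dual variables and of the remainder. The plan is to use the implicit function theorem on the map $(u,v,\widehat f_1,\widehat f_2,\widehat F_s)\mapsto$ marginal residuals, restricted to $\{v:\mathbf 1'v=0\}$. Its Jacobian in $v$ at the truth reduces to $\mathcal S$ restricted to $\mathbf 1^\perp$, which is invertible under connectivity. This assumption should be stated explicitly, e.g.\ $f_0>0$ on all cells.
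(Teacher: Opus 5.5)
Your proposal is correct, and its overall structure matches the paper's. The paper also proves the result in two steps and then applies the continuous mapping theorem: (a) it reduces $n\widehat T_n$ to $\frac18\{\sqrt n(\widehat t-\widehat f_2)\}'\mathcal S^+\{\sqrt n(\widehat t-\widehat f_2)\}+o_p(1)$ (\Cref{prop:hellinger-reduction}), and (b) it proves a CLT for $\sqrt n(\widehat t-\widehat f_2)$ (\Cref{prop:marginal-discrepancy-clt}).

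You differ in how you obtain (a). The paper Taylor-expands $T(\gamma)$ to second order around $\gamma_0$ in the redundant parameter $\gamma$, using $T(\gamma_0)=0$ and $DT(\gamma_0)=0$. It then simply asserts the Hessian identity $D^2T(\gamma_0)[h,h]=\frac14(Jh)'\mathcal S^+(Jh)$, where $J$ is the derivative of $d(\gamma)=P_s'f_1-f_2$. Finally it replaces $J\delta_n$ by $\widehat t-\widehat f_2$ through a first-order expansion of $d$.

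You instead work directly at $\widehat\gamma$:
\begin{enumerate}[(i)]
\item Using scaling invariance of the I-projection, you write $\widehat f_{\an}(a,b)=\widehat f_{\mar}(a,b)e^{\hat u_a+\hat v_b}$.
\item You linearize the Sinkhorn equations to get $\hat u=-P_s\hat v$ and $\mathcal S\hat v=\hat\delta$, up to $o_p(n^{-1/2})$.
\item You expand the Hellinger distance in the dual variables, obtaining $\widehat T_n=\frac18\hat v'\mathcal S\hat v+o_p(n^{-1})=\frac18\hat\delta'\mathcal S^+\hat\delta+o_p(n^{-1})$.
\end{enumerate}

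What each route buys: your computation is in effect a derivation of the Hessian identity that the paper takes for granted. It also explains why a Moore--Penrose inverse appears: $\ker\mathcal S$ is exactly the dual redundancy direction $\mathbf 1$, and both $\hat\delta$ and $\operatorname{range}(\mathcal S)$ lie in $\mathbf 1^\perp$. Your route also surfaces a hypothesis the paper leaves implicit. Existence and smoothness of $f_{\an}(\gamma)$ near $\gamma_0$, and $\ker\mathcal S=\operatorname{span}(\mathbf 1)$, both require something like a connected support graph (for instance, all cells of $f_s$ positive), which the proposition does not state. The paper's route is shorter and mirrors the proof of \Cref{thm:hellinger-limit}. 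However, it needs $T$ to be twice differentiable at $\gamma_0$, and your implicit-function-theorem argument on the dual equations is precisely what would justify that.

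For (b), your conditional-CLT argument is equivalent to the paper's asymptotically linear representation with influence function $\psi_i=c_{Z_{1i}}-t+\frac{S_i}{\pi_{Z_{1i}}}(e_{Z_{2i}}-c_{Z_{1i}})$. In that representation the two terms are uncorrelated, and both arguments deliver the same $V=V_{\mathrm{pan}}+\kappa\Sigma_2$.
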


\begin{proof}
The result follows from \Cref{prop:hellinger-reduction,prop:marginal-discrepancy-clt} and the continuous mapping theorem.
\end{proof}

\begin{lemma}\label{prop:hellinger-reduction}
Define
\[
\mathcal S=D_2-P_s'D_1P_s
\]
and let \(\mathcal S^+\) be the Moore--Penrose inverse of \(\mathcal S\).
Then
\[
n\widehat T_n
=
\frac18
\left\{
\sqrt n(\widehat t-\widehat f_2)
\right\}'
\mathcal S^+
\left\{
\sqrt n(\widehat t-\widehat f_2)
\right\}
+
o_p(1).
\]
\end{lemma}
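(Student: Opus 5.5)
The plan is to use the multiplicative (dual) form of the KL projection and linearize it around $\widehat f_{\mathrm{MAR}}$, exploiting that under the null $\widehat f_{\mathrm{AN}}$ and $\widehat f_{\mathrm{MAR}}$ differ only by an $O_p(n^{-1/2})$ perturbation of the second marginal. Two facts drive this. First, $\widehat f_{\mathrm{MAR}}=\Pi_1\widehat f_s$ has first marginal $\widehat f_1$ and second marginal $\widehat t=\widehat P_s'\widehat f_1$. Second, the KL projection of $\widehat f_s$ onto $\Pi(\widehat f_1,\widehat f_2)$ has the form $\widehat f_s(a,b)e^{\tilde u_a+\tilde v_b}$, and $\widehat f_{\mathrm{MAR}}$ is itself $\widehat f_s$ times a function of $a$. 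Together they give
\[
\widehat f_{\mathrm{AN}}(a,b)=\widehat f_{\mathrm{MAR}}(a,b)\exp(u_a+v_b),
\]
with $(u,v)$ pinned down by the two marginal constraints. At $\delta:=\widehat f_2-\widehat t=0$ the solution is $u=v=0$. Under the null $t=f_2$, so $\delta=O_p(n^{-1/2})$ by consistency of $\widehat\gamma$.

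\textbf{Linearization.} I will write $F(u,v;\widehat\gamma)$ for the vector of row and column sums of $\widehat f_{\mathrm{MAR}}e^{u_a+v_b}$ minus $(\widehat f_1,\widehat t)$, and view the system as $F(u,v;\widehat\gamma)=(0,\delta)$. To first order:
\begin{itemize}
\item The row constraints $\sum_b \widehat f_1(a)\widehat c_a(b)(u_a+v_b)=0$ give $u_a=-\widehat c_a'v$.
\item Substituting into the column constraints gives $(\widehat D_{t}-\widehat P_s'\widehat D_1\widehat P_s)v=\delta$, where $\widehat D_t=\operatorname{diag}(\widehat t)$.
\end{itemize}
The matrix $\widehat D_t-\widehat P_s'\widehat D_1\widehat P_s$ converges in probability to $\mathcal S=D_2-P_s'D_1P_s$ because $t=f_2$ under the null. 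Its kernel consists of vectors $v$ for which $v$ is constant on the support of each $c_a$, by the identity $v'\mathcal S v=\sum_a f_1(a)\operatorname{Var}_{c_a}(v)$. Such directions leave $u_a+v_b$ unchanged on the support of $f_{\mathrm{MAR}}$, so they do not affect $\widehat f_{\mathrm{AN}}$. I therefore normalize $v$ to lie in the range of $\mathcal S$ and solve there.

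To justify this rigorously, I apply the implicit function theorem to $F$ restricted to this complement, with $\widehat\gamma$ and $\delta$ as parameters. The Jacobian there is $\mathcal S$ restricted to its range, which is invertible. This yields
\[
v=\mathcal S^+\delta+O_p(\|\delta\|^2+\|\delta\|\,\|\widehat\gamma-\gamma_0\|),
\]
and $u=-P_s v+o_p(n^{-1/2})$. One must also check that $\delta$ lies asymptotically in the range of $\mathcal S$. This holds because $\delta$ is orthogonal to $\ker\mathcal S$: for $w\in\ker\mathcal S$ constant equal to $k_a$ on the support of $c_a$, we have $w'\widehat f_2=w'\widehat t$ up to the relevant order, since both marginals are compatible with the same joint support.

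\textbf{Hellinger expansion.} Next I expand
\[
\sqrt{\widehat f_{\mathrm{AN}}}-\sqrt{\widehat f_{\mathrm{MAR}}}=\sqrt{\widehat f_{\mathrm{MAR}}}\bigl(e^{(u_a+v_b)/2}-1\bigr)=\tfrac12\sqrt{\widehat f_{\mathrm{MAR}}}\,(u_a+v_b)+O_p(n^{-1}).
\]
This gives
\[
\widehat T_n=\tfrac18\sum_{a,b}\widehat f_{\mathrm{MAR}}(a,b)(u_a+v_b)^2+o_p(n^{-1}).
\]
Plugging in $u_a=-c_a'v$ gives $\sum_a f_1(a)\operatorname{Var}_{c_a}(v)=v'\mathcal S v$. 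With $v=\mathcal S^+\delta$ this equals $\delta'\mathcal S^+\mathcal S\mathcal S^+\delta=\delta'\mathcal S^+\delta$. Replacing estimated weights by their limits costs $o_p(n^{-1})$ because $\|v\|^2=O_p(n^{-1})$. Multiplying by $n$ gives the claim.

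\textbf{Main obstacle.} I expect the delicate step to be the implicit-function argument when $\mathcal S$ is singular, and specifically controlling the Moore--Penrose inverse under estimation error. The estimated matrix $\widehat D_t-\widehat P_s'\widehat D_1\widehat P_s$ could have a different rank from $\mathcal S$ in finite samples, and some cells may have zero probability. I would handle this by fixing the support pattern of $f_s$, noting that with probability tending to one $\widehat f_s$ has the same support, and working throughout on the fixed subspace $\operatorname{range}(\mathcal S)$. On that subspace all operators are smooth and invertible, so the pseudo-inverse never has to be applied to a random matrix.
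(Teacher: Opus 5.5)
Your argument is correct, but it takes a different route from the paper.

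\textbf{The paper's route.} The paper works in parameter space. It writes $\widehat T_n=T(\widehat\gamma)$ and uses $T(\gamma_0)=0$ and $DT(\gamma_0)=0$ under the null. It then expands to second order and invokes the identity $D^2T(\gamma_0)[h,h]=\frac14(Jh)'\mathcal S^+(Jh)$. Here $J$ is the derivative of $d(\gamma)=P_s'f_1-f_2$, and the identity is asserted for simplex tangent directions without derivation. Finally, a first-order expansion $d(\widehat\gamma)=J\delta_n+o_p(n^{-1/2})$ lets it replace $J\delta_n$ by $\widehat t-\widehat f_2$.

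\textbf{Your route.} You work at $\widehat\gamma$ directly. You use the exact tilt representation $\widehat f_{\mathrm{AN}}=\widehat f_{\mathrm{MAR}}e^{u_a+v_b}$, profile out $u$ through the row constraints, and apply the implicit function theorem to $v$ on $\operatorname{range}(\mathcal S)$. You then expand the Hellinger integrand pointwise. In effect you prove the Hessian identity the paper only states. You also show where $\mathcal S$ and its pseudo-inverse come from: $\mathcal S$ is the Jacobian of the column-margin map in the tilt $v$, with quadratic form $\sum_a f_1(a)\operatorname{Var}_{c_a}(v)$. Because you expand directly in $\delta=\widehat f_2-\widehat t$, you need no separate step replacing $J\delta_n$.

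\textbf{Trade-off.} The paper's version is shorter and modular. However, it rests on an unproved second-derivative formula and on tacit $C^2$-smoothness of $T$ in a redundant parametrization. Yours costs an implicit-function argument and the scaling characterization of the I-projection, but it is self-contained.

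\textbf{One weak step.} Your justification that $\delta$ lies in $\operatorname{range}(\mathcal S)$ ``up to the relevant order'' is not right as stated.
\begin{itemize}
\item Suppose the bipartite support graph of $f_s$ (rows $a$, columns $b$ with $f_2(b)>0$, edges where $f_s(a,b)>0$) is connected. Then $\ker\mathcal S$ is spanned by $\mathbf{1}$ and the $e_b$ with $f_2(b)=0$. In that case $\delta$ is orthogonal to the kernel exactly: $\mathbf{1}'\delta=1-1=0$, and $\delta_b=0$ a.s.\ on null columns.
\item Suppose instead the graph has a separate component $(R_C,B_C)$. Then $\mathbf{1}_{B_C}'\delta=\sum_{b\in B_C}\widehat f_2(b)-\sum_{a\in R_C}\widehat f_1(a)$. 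This is of exact order $n^{-1/2}$, because $\widehat f_1$ and $\widehat f_2$ come from independent samples. In that case no element of $\Pi(\widehat f_1,\widehat f_2)$ is absolutely continuous with respect to $\widehat f_s$, so $\widehat f_{\mathrm{AN}}$ does not even exist.
\end{itemize}
You should therefore replace that sentence with an explicit connectedness assumption on the support of $f_s$. The paper's proof needs the same assumption tacitly.
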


\begin{proof}

Write
\[
\widehat T_n=T(\widehat\gamma),
\]
where
\[
T(\gamma)
=
\frac12
\sum_{a=1}^A\sum_{b=1}^B
\left(
\sqrt{f_{\mathrm{AN}}(a,b;\gamma)}
-
\sqrt{f_{\mathrm{MAR}}(a,b;\gamma)}
\right)^2.
\]
Under \(H_0\),
\[
T(\gamma_0)=0
\qquad\text{and}\qquad
DT(\gamma_0)=0.
\]
Let
\[
\delta_n=\widehat\gamma-\gamma_0.
\]
Because \(\delta_n=O_p(n^{-1/2})\), the second-order expansion of \(T\)
around \(\gamma_0\) gives
\[
\widehat T_n
=
\frac12D^2T(\gamma_0)[\delta_n,\delta_n]
+
o_p(n^{-1}).
\]

Let
\[
J=Dd(\gamma_0)
\]
denote the derivative of
\[
d(\gamma)=P_s'f_1-f_2.
\]
We have
\[
D^2T(\gamma_0)[h,h]
=
\frac14(Jh)'\mathcal S^+(Jh)
\]
for every direction \(h\) in the simplex tangent space. Hence,
\[
\widehat T_n
=
\frac18
(J\delta_n)'\mathcal S^+(J\delta_n)
+
o_p(n^{-1}).
\]

A first-order expansion of \(d\) around \(\gamma_0\) yields
\[
d(\widehat\gamma)
=
d(\gamma_0)+J\delta_n+r_n,
\qquad
r_n=o_p(n^{-1/2}).
\]
Under \(H_0\), \(d(\gamma_0)=0\), while
\[
d(\widehat\gamma)
=
\widehat P_s'\widehat f_1-\widehat f_2
=
\widehat t-\widehat f_2.
\]
It follows that
\[
J\delta_n
=
\widehat t-\widehat f_2-r_n.
\]

Moreover,
\[
\widehat t-\widehat f_2=O_p(n^{-1/2}).
\]
Since \(\mathcal S^+\) is a fixed matrix,
\[
(J\delta_n)'\mathcal S^+(J\delta_n)
-
(\widehat t-\widehat f_2)'
\mathcal S^+
(\widehat t-\widehat f_2)
=
-2(\widehat t-\widehat f_2)'\mathcal S^+r_n
+
r_n'\mathcal S^+r_n
=
o_p(n^{-1}).
\]
Substituting this result into the expansion of \(\widehat T_n\) completes the proof.
\end{proof}

\begin{lemma}
\label{prop:marginal-discrepancy-clt}

Suppose that
\[
\frac{n}{n_r}\longrightarrow\kappa\in(0,\infty)
\]
and
\[
f_1(a)>0
\qquad\text{and}\qquad
\pi_a>0
\]
for every \(a \in \{1,\dots,A\}\). Then, under $H_0$,
\[
\sqrt n
(\widehat t-\widehat f_2)
\rightsquigarrow
G,
\qquad
G\sim N(0,V),
\]
where
\[
V=V_{\mathrm{pan}}+\kappa\Sigma_2,
\]
\[
\Sigma_2
=
\operatorname{diag}(f_2)-f_2f_2',
\]
and
\[
V_{\mathrm{pan}}
=
\sum_{a=1}^A
f_1(a)(c_a-f_2)(c_a-f_2)'
+
\sum_{a=1}^A
\frac{f_1(a)}{\pi_a}
\left\{
\operatorname{diag}(c_a)-c_ac_a'
\right\}.
\]
\end{lemma}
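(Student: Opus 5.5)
Under $H_0$ we have $t=f_2$, so the plan is to write $\sqrt n(\widehat t-\widehat f_2)$ as a normalized sum of two independent pieces. One piece comes from the panel, the other from the refreshment sample. Concretely,
\[
\sqrt n(\widehat t-\widehat f_2)=\sqrt n(\widehat t-t)-\sqrt{n/n_r}\,\sqrt{n_r}(\widehat f_2-f_2).
\]
The refreshment sample is independent of the panel, so it suffices to find the joint limits of the two terms separately and add the variances. The second term is a multinomial frequency vector. By the multivariate CLT, $\sqrt{n_r}(\widehat f_2-f_2)\rightsquigarrow N(0,\Sigma_2)$ with $\Sigma_2=\operatorname{diag}(f_2)-f_2f_2'$. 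Since $n/n_r\to\kappa$, Slutsky's lemma gives the contribution $\kappa\Sigma_2$.

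For the panel term I will linearize $\widehat t=\sum_a\widehat f_1(a)\widehat c_a$ as an i.i.d. average over panel units $i=1,\dots,n$. Set $N_a=\sum_i \mathbf 1(Z_{1i}=a)$ and $M_a=\sum_i S_i\mathbf 1(Z_{1i}=a)$. Then $\widehat f_1(a)=N_a/n$ and $\widehat c_a=M_a^{-1}\sum_i S_i\mathbf 1(Z_{1i}=a)e_{Z_{2i}}$. Using $M_a/n\to f_1(a)\pi_a>0$, the ratio $\widehat c_a-c_a$ has the expansion
\[
\widehat c_a-c_a=\frac{1}{n f_1(a)\pi_a}\sum_i S_i\mathbf 1(Z_{1i}=a)(e_{Z_{2i}}-c_a)+o_p(n^{-1/2}).
\]
Here $c_a$ is the conditional mean of $e_{Z_2}$ given $Z_1=a,S=1$. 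Combining this with $\widehat f_1(a)-f_1(a)=n^{-1}\sum_i(\mathbf 1(Z_{1i}=a)-f_1(a))$ and discarding the product of the two errors (which is $O_p(n^{-1})$), we get
\[
\sqrt n(\widehat t-t)=\frac1{\sqrt n}\sum_i\psi_i+o_p(1),
\]
with influence function
\[
\psi_i=\sum_a\Bigl[(\mathbf 1(Z_{1i}=a)-f_1(a))c_a+\frac{S_i\mathbf 1(Z_{1i}=a)}{\pi_a}(e_{Z_{2i}}-c_a)\Bigr].
\]
Note that $\sum_a(\mathbf 1(Z_{1i}=a)-f_1(a))c_a=c_{Z_{1i}}-t$, and $t=f_2$ under $H_0$.

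Next I compute the variance of $\psi_i$. Its two parts are uncorrelated, because the second has conditional mean zero given $(Z_{1i},S_i)$. The first part has variance
\[
\sum_a f_1(a)(c_a-f_2)(c_a-f_2)'.
\]
For the second part, conditional on $Z_1=a$ and $S=1$ the vector $e_{Z_2}$ has covariance $\operatorname{diag}(c_a)-c_ac_a'$, and the event has probability $f_1(a)\pi_a$. With the weight $\pi_a^{-2}$ this gives $\sum_a f_1(a)\pi_a^{-1}\{\operatorname{diag}(c_a)-c_ac_a'\}$. The two parts together are exactly $V_{\mathrm{pan}}$. The multivariate CLT then applies, and independence of the panel and refreshment samples yields $G\sim N(0,V_{\mathrm{pan}}+\kappa\Sigma_2)$.

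The main obstacle is the ratio-estimator linearization of $\widehat c_a$. It needs $\pi_a>0$ and $f_1(a)>0$ so that $M_a/n$ is bounded away from zero with probability tending to one. I will handle it with a standard delta-method argument on the pair $(M_a/n,\ n^{-1}\sum_iS_i\mathbf 1(Z_{1i}=a)e_{Z_{2i}})$.

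A second point to check is that $\widehat f_1$ and $\widehat c_a$ are computed from the same panel units, so their errors could in principle be correlated. This is handled automatically by working with the joint influence function $\psi_i$, and the uncorrelatedness argument above shows that the cross term vanishes.
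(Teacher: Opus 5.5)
Your proposal is correct and follows essentially the same route as the paper. It uses the same split into $\sqrt n(\widehat t-t)$ and the rescaled refreshment term, and the same linearization of $\widehat c_a$, giving $\psi_i=c_{Z_{1i}}-t+S_i\pi_{Z_{1i}}^{-1}(e_{Z_{2i}}-c_{Z_{1i}})$; the uncorrelatedness argument for $V_{\mathrm{pan}}$ and the independence step adding $\kappa\Sigma_2$ also match. Your delta-method justification of the ratio expansion, using $f_1(a)\pi_a>0$, fills in a step the paper only asserts as ``the usual expansion.''
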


\begin{proof}

We first derive an asymptotically linear representation of
\(\widehat t\). For each \(a\), the usual expansion of a conditional
sample proportion gives
\[
\sqrt n(\widehat c_a-c_a)
=
\frac1{\sqrt n}
\sum_{i=1}^n
\frac{
S_i\mathbf 1(Z_{1i}=a)
}{
f_1(a)\pi_a
}
\left(e_{Z_{2i}}-c_a\right)
+
o_p(1).
\]
Also,
\[
\sqrt n\{\widehat f_1(a)-f_1(a)\}
=
\frac1{\sqrt n}
\sum_{i=1}^n
\left\{
\mathbf 1(Z_{1i}=a)-f_1(a)
\right\}.
\]

Expanding
\[
\widehat t=\sum_{a=1}^A\widehat f_1(a)\widehat c_a
\]
around \((f_1,c_1,\ldots,c_A)\) yields
\[
\widehat t-t
=
\sum_{a=1}^A
c_a\{\widehat f_1(a)-f_1(a)\}
+
\sum_{a=1}^A
f_1(a)(\widehat c_a-c_a)
+
o_p(n^{-1/2}).
\]
Consequently,
\[
\sqrt n(\widehat t-t)
=
\frac1{\sqrt n}\sum_{i=1}^n\psi_i+o_p(1),
\]
where
\[
\psi_i
=
c_{Z_{1i}}-t
+
\frac{S_i}{\pi_{Z_{1i}}}
\left(e_{Z_{2i}}-c_{Z_{1i}}\right).
\]

The influence function has mean zero. Indeed, conditional on
\(Z_{1i}=a\),
\[
\E\left[
\frac{S_i}{\pi_a}
\left(e_{Z_{2i}}-c_a\right)
\,\middle|\,
Z_{1i}=a
\right]
=0.
\]
It follows that the two terms in \(\psi_i\) are uncorrelated. Moreover,
\[
Var(c_{Z_{1i}}-t)
=
\sum_{a=1}^A
f_1(a)(c_a-t)(c_a-t)'
\]
and
\[
\begin{aligned}
Var\left(
\frac{S_i}{\pi_{Z_{1i}}}
\left(e_{Z_{2i}}-c_{Z_{1i}}\right)
\right)
=
\sum_{a=1}^A
\frac{f_1(a)}{\pi_a}
\left\{
\operatorname{diag}(c_a)-c_ac_a'
\right\}.
\end{aligned}
\]
Therefore,
\[
Var(\psi_i)=V_{\mathrm{pan}},
\]
and the central limit theorem gives
\[
\sqrt n(\widehat t-t)
\rightsquigarrow
G_{\mathrm{pan}},
\qquad
G_{\mathrm{pan}}\sim N(0,V_{\mathrm{pan}}).
\]
On the other hand,
\[
\sqrt{n_r}(\widehat f_2-f_2)
=
\frac1{\sqrt{n_r}}
\sum_{j=1}^{n_r}
\left(e_{Z_{2j}^r}-f_2\right)
\rightsquigarrow
G_2,
\]
where
\[
G_2\sim N(0,\Sigma_2),
\qquad
\Sigma_2
=
\operatorname{diag}(f_2)-f_2f_2'.
\]
Since \(n/n_r\to\kappa\),
\[
\sqrt n(\widehat f_2-f_2)
\rightsquigarrow
\sqrt\kappa\,G_2.
\]
Since \(G_{\mathrm{pan}}\) and \(G_2\) are independent,
\[
\sqrt n
\left[
(\widehat t-\widehat f_2)-(t-f_2)
\right]
=
\sqrt n(\widehat t-t) - \sqrt n(\widehat f_2-f_2) \rightsquigarrow
G_{\mathrm{pan}}-\sqrt\kappa\,G_2 =:G.
\]
Hence,
\[
G\sim N(0,V),
\qquad
V=V_{\mathrm{pan}}+\kappa\Sigma_2.
\]
\end{proof}

\section{Construction of the cognition variable}\label{sec:app-cognition}


The UAS administers several cognitive tests every two years. To create a summary measure of cognition, we consider five cognitive assessments: Serial Sevens (sequentially subtracting 7 from 100 five times), Picture Vocabulary (presenting participants with a series of pictures and asking them to type the names of each one), Verbal Analogies (asking participants to identify relationships between word pairs), Number Series (presenting participants with a series of four numbers with one number missing, and asking them to identify the missing number), and Numeracy (questions asked include ones addressing
probability and computations related to monetary spending). Each of these tests is described in more detail in \citet{gatz2026online}.

Using cognition in longitudinal analysis requires one to correct for practice effects. Having seen a test before tends to improve performance on a test. A simple model of practice effects is
\[
y_{it}=\alpha_t+f(age_{it})+\mu_i +\epsilon_{it},
\]
where $y_{it}$ is a vector of cognitive scores (five in our case) for respondent $i$ at replication $t$. Hence, $t=1$ means the respondent sees the test for the first time; $t=2$ means it is the second time and so forth. The vector $\alpha_t$ represents the practice effect for replication $t$. We assume that scores change with age. We have approximated the function $f(age_{it})$ by a fourth-degree polynomial; $\mu_i$ is an individual effect (the extent to which an individual's cognition in a domain deviates from the average age pattern).

Estimation of the equation is straightforward. Denoting estimates by carets, $\widehat{x}_{it}:=y_{it}-\widehat{\alpha}_t$ can be seen as an estimate corrected for practice effects and age. To distill one overall cognition variable out of the vector $x_{it}$, we perform a confirmatory factor analysis, i.e., we estimate the model
\[
\widehat{x}_{it}=\gamma\xi_{it}+\nu_{it},
\]
where $\xi_{it}$ is a scalar (cognition of individual $i$) and the error covariance matrix of $\nu_{it}$  is assumed to be diagonal. The variance of $\xi_{it}$ is normalized to one. Adopting the convention of retaining factors with eigenvalues greater than one suggests that one factor can explain the five cognitive domains (the largest eigenvalue is 2.27, while the second largest eigenvalue equals 0.12).

\end{document}